\documentclass[envcountsame]{llncs}
\usepackage{graphicx}
\usepackage{amssymb}
\usepackage{amsmath}
\usepackage{enumerate}
\usepackage{todonotes}
\usepackage[normalem]{ulem}
\usepackage{subfig}
\usepackage{color}
\usepackage{ifsym}

\newcommand{\singlevertical}{single-vertical}

\newcommand{\component}{object}
\newcommand{\components}{objects}
\newcommand{\m}{{\bf m}}

\newtheorem{observation}{Observation}

\title{Splitting $B_2$-VPG graphs into
	outer-string and co-comparability graphs}
\author{Therese Biedl\thanks{The author was supported by NSERC.} \and Martin Derka\thanks{The author was supported by Vanier CGS.}}
\institute{David R.~Cheriton School of Computer Science,\\ University of Waterloo, Waterloo, ON, N2L 3G1, Canada\\
\email{\{biedl,mderka\}@uwaterloo.ca}
}

\begin{document}
\maketitle

\begin{abstract}
In this paper, we show that any $B_2$-VPG graph (i.e., an intersection
graph of orthogonal curves with at most 2 bends) can be decomposed into
$O(\log n)$ outerstring graphs or $O(\log^3 n)$ permutation graphs.  This
leads to better approximation algorithms for hereditary graph problems, such
as independent set, clique and clique cover, on $B_2$-VPG graphs.
\end{abstract}

\section{Preliminaries}

An \emph{intersection representation} of a graph is a way of portraying a graph
using geometric objects. In such a representation, every object 
corresponds to a vertex in the graph, and there is an edge between vertices $u$ and $v$
if and only if their two objects $\mathbf{u}$ and $\mathbf{v}$ intersect. 

One of the well-studied classes of such intersection graphs 
are the \emph{string graph}, where the objects are (open) curves in the plane.
An \emph{outer-string} representation is one
where all the curves are in inside a polygon $P$ and touch the
boundary of $P$ at least once.    
A string representation is called a {\em 1-string representation} if any
two strings intersect at most once.
It is called a {\em $B_k$-VPG-representation}
(for some $k\geq 0$) if every curve is an orthogonal curve with 
at most $k$ bends.    We naturally use the term {\em outer-string graph}
for  graphs that have an outer-string representation, and similarly for
other types of intersecting objects.

\paragraph{Our contribution:} 
This paper is concerned with partitioning string graphs (and other classes
of intersection graphs) into subgraphs that have nice properties, such as
being outerstring graphs or permutation graphs (defined formally below).  
We can then use such a partition
to obtain approximation algorithms for some graph problems, such as weighted
independent set, clique, clique cover and colouring.  More specifically,
``partitioning'' in this paper usually means a {\em vertex partition}, i.e.,
we split the vertices of the graph as $V=V_1\cup \dots \cup V_k$ such
that the subgraph induced by each $V_i$ has nice properties.  In one
case we also do an {\em edge-partition} where we partition $E=E_1\cup E_2$
and then work on the two subgraphs $G_i=(V,E_i)$.

Our paper was inspired by a paper by
Lahiri et al.~\cite{cit:cocoa} in 2014, which
gave an algorithm to approximate the maximum (unweighted) independent
set in a $B_1$-VPG graph within a factor of $4\log^2 n$.  We greatly
expand on their approach as follows.  First,
rather than solving maximum independent set directly, we instead
split such a graph into subgraphs. This allows
us to approximate not just independent set, but more generally any hereditary
graph problem that is solvable in such graphs.

Secondly, rather than using co-comparability graphs for splitting as Lahiri
et al.~did, we use outerstring graphs.  This allows us to stop the splitting
earlier, reducing the approximation factor from $4\log^2 n$ to $2\log n$,
and to give an algorithm for {\em weighted} independent set (wIS).

Finally, we
allow much more general shapes.  For splitting into outerstring graphs,
we can allow any shape that can
be described as the union of one vertical and any number of horizontal
segments (we call such intersection graphs {\singlevertical}).  
%As we show,
%this graph class includes not only planar graphs (for which approximation
%algorithms are well-known), but also boxicity-2 graphs and disk graphs
%where (to our knowledge) no approximation algorithms better than ???
%were known for weighted independent set.
%\todo{go search what was known}
Our results imply a $2\log n$-approximation algorithm for wIS
in such graphs, which include $B_1$-VPG graphs,
and a $4\log n$-approximation for wIS in $B_2$-VPG graphs.

\smallskip

In the second part of the paper, we consider splitting the graph such
that the resulting subgraphs are co-comparability graphs.    This type
of problem was first considered by Keil and Stewart \cite{KeilStewart2006},
who showed that so-called subtree filament graphs can be vertex-partitioned
into $O(\log n)$ co-comparability graphs.  The work of Lahiri et 
al.~\cite{cit:cocoa} can be seen as proving that every $B_1$-VPG graph can
be vertex-partitioned into $O(\log^2 n)$ co-comparability graphs.
We focus here on the super-class of $B_2$-VPG-graphs, and show that
they can be vertex-partitioned into $O(\log^3 n)$ co-comparability graphs.
Moreover, these co-comparability graphs have poset dimension 3, and if
the $B_2$-VPG representation was 1-string, then they are 
permutation graphs.  This leads to better approximation algorithms
for clique, colouring and clique cover for $B_2$-VPG graphs.

%\section{{\Singlevertical} graphs}
\section{Decomposing into outerstring graphs}

We argue in this section how to split a graph into outerstring graphs
if it has an intersection representation of a special form.
A {\em single-vertical object} is a connected set 
$S\subset \Bbb{R}^2$ of the form
$S=s_0\cup s_1\cup \dots \cup s_k$, where $s_0$ is a vertical segment and
$s_1,\dots,s_k$ are horizontal segments, for some finite $k$.
Given a number of single-vertical objects $S_1, \dots, S_n$, we define
the intersection graph of it in the usual way, by defining one vertex
per object and adding an edge whenever objects have at least one point in
common (contacts are considered intersections).    
We call such a representation a \emph{{\singlevertical} representation} and the 
graph a \emph{{\singlevertical} intersection graph}. 
%The geometric objects that form the {\singlevertical}
%representation are called \emph{\components}. 
The \emph{$x$-coordinate} of one {\singlevertical} {\component}
is defined to be the $x$-coordinate of the (unique) vertical segment. 
We consider a horizontal segment to be a {\singlevertical} object as well,
by attaching a zero-length vertical segment at one of its endpoints.
%and let its $x$-coordinate to be the one of its endpoints.

\begin{theorem}
\label{thm:main}
Let $G$ be a {\singlevertical} intersection graph.
Then the vertices of $G$
can be partitioned into at most $\max\{1,2\log n\}$% 
\footnote{This bound is not tight; a more careful analysis shows
that we get at most $\max\{1,2\lceil \log n\rceil -2\}$ graphs.}
sets such that the subgraph
induced by each is an outer-string graph.
\end{theorem}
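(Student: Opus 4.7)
The plan is a divide-and-conquer on the $x$-coordinates of the vertical segments. I choose a vertical line $\ell$ between two consecutive vertical-segment $x$-coordinates so that roughly half of the objects lie on each side; call these halves $L$ and $R$. Let $M_L \subseteq L$ consist of those objects having some horizontal segment that crosses $\ell$ (and symmetrically $M_R \subseteq R$). I will show that $M_L$ and $M_R$ each induce an outer-string subgraph, then recurse on $L \setminus M_L$ and $R \setminus M_R$.

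The key lemma is that $M_L$ induces an outer-string graph (the argument for $M_R$ is symmetric). For each $O \in M_L$, let $O^L$ be its restriction to the closed half-plane $\{x \leq x(\ell)\}$. Since the vertical segment of $O$ lies strictly left of $\ell$ and every horizontal segment is attached to it, $O^L$ remains a connected single-vertical object and touches $\ell$. The critical observation is that the intersection graph of $\{O^L\}$ equals that of $\{O\}$: intersections occurring on or left of $\ell$ are trivially inherited, while if $O_1 \cap O_2$ contains a point $p$ strictly right of $\ell$, the meeting segments must both be horizontal (no vertical segment lies right of $\ell$) at some common $y$-coordinate $y_p$; both of these horizontal segments extend from left of $\ell$ (where their vertical segments sit) past $p$, so they cross $\ell$ and share the point $(x(\ell), y_p) \in O_1^L \cap O_2^L$. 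Once this is established, a standard tree-doubling argument turns each $O^L$ into a simple curve starting and ending on $\ell$, and placing these curves inside a rectangle with $\ell$ as its right boundary gives the required outer-string representation.

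For the counting, I argue inductively that after $k$ recursion levels the (up to) $2^k$ remaining subproblems occupy pairwise disjoint vertical strips: an object that survives to the subproblem at level $k$ has both its vertical segment and all its horizontal segments confined to that subproblem's range, because any horizontal segment crossing such a boundary would have been pulled into some $M_L$ or $M_R$ at a strictly higher level. Consequently, at level $k$ the $M_L$-subgraphs from different subproblems are vertex-disjoint with no edges between them, so their disjoint union is still an outer-string graph, and similarly for the $M_R$-subgraphs. This produces only two outer-string graphs per level, and since each subproblem halves in size, the recursion depth is at most $\lceil \log n \rceil$, yielding the claimed bound of $2 \log n$.

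The main obstacle I expect is the last geometric step: converting the tree-like clipped objects $O^L$ into simple curves that all fit inside a single simple polygon with $\ell$ on its boundary, while preserving every intersection. A standard thickening-and-traversal argument should suffice, but some care is needed when several objects touch $\ell$ at a common point (which, by the intersection-preservation argument, happens precisely when their horizontal segments cross $\ell$ at the same height), to ensure that all the resulting intersections at and near $\ell$ are realized correctly and that no spurious crossings are introduced.
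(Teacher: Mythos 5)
Your proposal is correct and follows essentially the same route as the paper: split at a median vertical line, observe that the objects crossing it from each side induce an outer-string graph (realized by clipping/tracing each object against the line and attaching it there), and recurse on the two sides, merging the resulting subgraphs across disjoint vertical strips since they share no edges. The only nitpick is that your per-level count really gives $2\lceil\log n\rceil$ rather than $2\log n$; this is absorbed by treating subproblems of size at most $2$ as base cases, exactly as the paper's induction does.
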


Our proof of Theorem~\ref{thm:main} uses a splitting technique implicit in the
the recursive approximation algorithm of Lahiri et al.~\cite{cit:cocoa}.
Let $R$ be a {\singlevertical} representation on $G$ and $S$ be an ordered 
list of the $x$-coordinates of all the {\components} in $R$. 
We define the \emph{median} $m$ of $R$ as the
smallest number such that at most $\frac{|S|}{2}$ $x$-coordinates in $S$ are
smaller than $m$ and
at most $\frac{|S|}{2}$ $x$-coordinates in $S$ are
bigger than $m$.   (If $|S|$ is odd then $m$ is hence the $x$-coordinate
of at least one object.)
Now split $R$ into three sets: The \emph{middle} set $M$ of {\components}
that intersect the vertical line {\m} with $x$-coordinate $m$; 
the \emph{left} set $L$ of {\components}
whose $x$-coordinates are smaller than $m$ and that do not belong to $M$, and the \emph{right} 
set $R$ of {\components} whose $x$-coordinates are bigger than $m$ and that do not belong to $M$.
Split $M$ further into $M_L$ = \{ $c$ {\textbar} the $x$-coordinate of $c$ is
less than $m$\} and $M_R = M \setminus M_L$. 

\begin{figure}
\centering
\includegraphics[width=.8\textwidth]{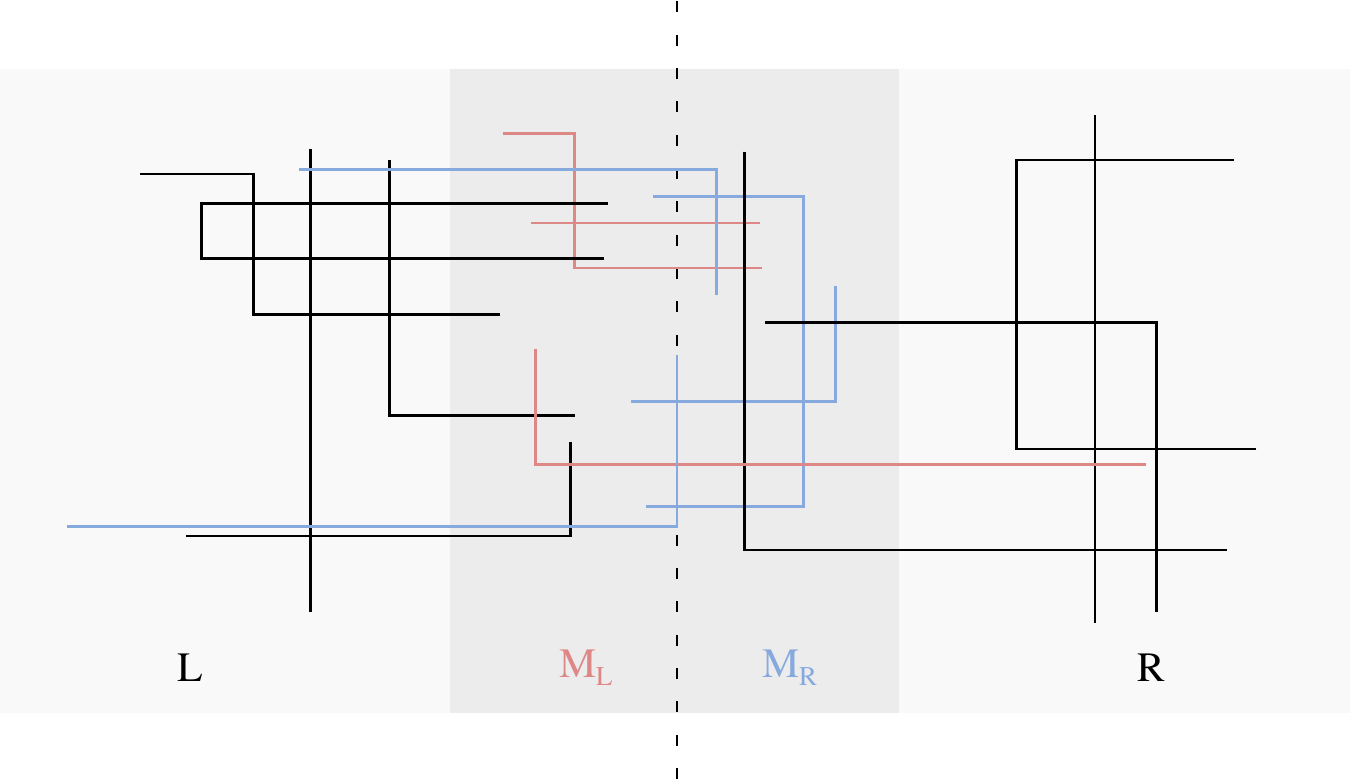}
\caption{The split of a representation into $L$, $M = M_L \cup M_R$ and $R$.}
\end{figure}

\begin{lemma}
\label{claim:outer-string}
The subgraph induced by the {\components} in $M_L$ is outer-string.
\end{lemma}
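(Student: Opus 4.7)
The plan is to truncate every \component{} in $M_L$ to the closed left half-plane $H = \{(x,y) : x \leq m\}$ and then realize the resulting pieces as strings inside a large axis-aligned rectangle whose right side lies on the line $\m$. For each $A \in M_L$ I would define $T_A := A \cap H$. Because the vertical spine of $A$ has $x$-coordinate strictly less than $m$ and every horizontal segment of $A$ must meet this spine (this is the only way the \component{} can be connected, since two non-collinear horizontal segments cannot share a point), $T_A$ is a connected tree of axis-parallel segments lying in $H$. Since $A \in M$ meets $\m$, the tree $T_A$ touches the right boundary of $H$.

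Next I would verify that this truncation preserves the edges of the induced subgraph, i.e.\ that $A \cap B \neq \emptyset \iff T_A \cap T_B \neq \emptyset$ for $A,B \in M_L$. The nontrivial direction: if $A$ and $B$ share a point $p$ with $p_x > m$, then $p$ lies on horizontal segments of both \components{}, which therefore share the $y$-coordinate $p_y$ and each extend from their respective spines (at $x < m$) rightward past $x = m$. Hence both segments contain the point $(m, p_y)$, which lies in $T_A \cap T_B$.

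Finally I would convert each tree $T_A$ into a simple open curve $c_A$ using a standard thickening-and-cut construction: since $T_A$ is contractible, the boundary of a sufficiently thin $\varepsilon$-neighbourhood of $T_A$ is a simple closed curve, and cutting it at a point on $\m$ produces an open curve having an endpoint on $\m$. Picking $\varepsilon$ smaller than the minimum distance between any pair of disjoint trees $T_A, T_B$ guarantees $c_A \cap c_B \neq \emptyset \iff T_A \cap T_B \neq \emptyset$. Enclosing all these curves in an axis-aligned rectangle $P$ whose right side lies on $\m$ then gives the desired outer-string representation, with each $c_A$ touching the right side of $P$.

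The step I expect to be most delicate is this final conversion, since a single-vertical \component{} is genuinely a tree rather than an arc: one must argue that thickening-then-cutting turns trees into open curves without creating or destroying any intersection. Finiteness of the representation is what salvages the argument, allowing one to choose a single uniform $\varepsilon$ that preserves all intersections and non-intersections simultaneously.
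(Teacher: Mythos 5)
Your proof is correct and follows essentially the same route as the paper's: restrict each \component{} to the closed half-plane left of \m, trace the resulting connected union of segments with a thin closed curve, and cut that curve at its attachment to \m{} to obtain an outer-string representation in which every curve reaches the boundary line \m. Your second paragraph---checking that an intersection of two \components{} occurring at $x>m$ must come from overlapping horizontal segments that therefore also meet at $(m,p_y)$---is a welcome explicit justification of the paper's unargued assertion that all intersections ``occur left of \m.''
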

\begin{proof}
All the {\components} in $M_L$ intersect curve \m. Since all the $x$-coordinates of
those {\components} are smaller than $m$, all the intersections of the {\components} occur
left of \m. If an {\component} is not a curve, one can replace it by a closed curve that traces the
shape of the {\component} left of {\m}. Breaking the closed trace-curve at one of the attachments to {\m}
produces an open curve. Doing so for every {\component} that is not a curve, one obtains an outer-string 
representation where all curves attach to {\m} from one side and that induces the same graph 
as $M_L$. 
\qed
\end{proof}

A similar proof shows that the graph induced by {\components} in $M_R$ 
is an outerstring graph.
\iffalse
The following claim about disconnected graphs will be useful in the proof of Theorem~\ref{thm:main}:

\begin{lemma}
\label{claim:disconnected}
A disconnected graph is outer-string if and only if all its {\components} are outer-string.
\end{lemma}
\begin{proof}
All the curves in an outer-string representation attach to a polygon. By breaking the
polygon, one can turn it into a ``base curve'' to which all the curves attach from 
one side. Given outer-string representations of the {\components}, one can suitably translate
the representations in the plane and connect the ``base curves'' to a single long ``base curve.''
This produces the representation of the whole graph. For the converse, given an outer-string
representation of the graph, one can erase curves to obtain an outer-string representation
of each {\component individually. \qed
\end{proof}
\fi
Now we can prove our main result:

\begin{proof}[of Theorem~\ref{thm:main}]
Let $G$ be a graph with a {\singlevertical} representation. We proceed by 
induction on the number of vertices $n$ in $G$. 
If $n \leq 2$, then 
the graph is outer-string and we are done, so assume $n
\geq 3$, which implies that $\log n\geq \frac{3}{2}$.
By Lemma~\ref{claim:outer-string}, both $M_L$ and $M_R$ individually induce
an outer-string graph.%representation of a subgraph of $G$. 
Applying induction, we get at most 
$$\max\{1,2\log|L|\} \leq \max\{1,2\log(n/2)\} = \max\{1,2\log n-2\} 
= 2\log n-2$$
outer-string
subgraphs for $L$, and similarly at most $2\log n-2$ outerstring
subgraphs for $R$.  Since
the {\components} in $L$ and $R$ are separated by the vertical 
line \m, there are no edges between the corresponding vertices.
Thus any outerstring subgraph defined by $L$ can be combined with any
outerstring subgraph defined by $R$ to give one outerstring graph.
We hence obtain $2\log n-2$
outerstring graphs from recursing into $L$ and $R$.  Adding to this the
two outer-string graphs defined by $M_L$ and $M_R$ gives the result.
\qed
\end{proof}

%\todo{Fair amount of change here.  We previously did not consider
%the time it takes to split into groups.}
Our proof is constructive, and finds the partition within $O(\log n)$
recursions.  In each recursion we must find the median $m$ and then
determine which objects intersect the line \m.  If we pre-sort three
lists of the objects (once by $x$-coordinate of the vertical segment,
once by leftmost $x$-coordinate, and once by rightmost $x$-coordinate),
and pass these lists along as parameters,
then each recursion can be done in $O(n)$ time, without linear-time
median-finding.  The pre-sorting takes $O(N+n\log n)$ time, where $N$
is the total number of segments in the representation.  Hence the run-time
to find the partition is $O(N+n\log n)$.

The above results were for {\singlevertical} graphs.  However, the
main focus of this paper is $B_k$-VPG-graphs, for $k\leq 2$.  Clearly
$B_1$-VPG graphs are {\singlevertical} by definition.  But $B_2$-VPG-graphs
are not obviously {\singlevertical}, since they might use curves in form
of a $U$, with two vertical segments.
However, we can still handle them by doubling the number of graphs into
which we split.

\begin{lemma}
\label{lem:B2VPGsinglevertical}
Let $G$ be a $B_2$-VPG graph.
Then the vertices of $G$
can be partitioned into $2$ sets such that the subgraph
induced by each is a {\singlevertical} $B_2$-VPG graph.
\end{lemma}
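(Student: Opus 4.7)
The plan is to classify $B_2$-VPG curves by bend pattern and to isolate those that are not already single-vertical. Since each $B_2$-VPG curve has at most two bends, it consists of at most three orthogonal segments alternating between horizontal and vertical. Up to reflection, the possible patterns are therefore $H$, $V$, $HV$ or $VH$ (L-shapes), $HVH$ (two horizontals attached to the endpoints of a central vertical, a ``$C$'' or ``$Z$''), and $VHV$ (two verticals attached to the endpoints of a central horizontal, a ``$U$'' or ``$\cap$''). Everything except the $VHV$-pattern already consists of a single vertical segment (possibly of zero length, as the paper allows for purely horizontal curves) together with some horizontal segments attached to it, so it is already a single-vertical object. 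The $VHV$-curves are the only obstruction, because they contain two vertical segments.

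My plan is to set $V_1$ to be the set of vertices whose curves in the given $B_2$-VPG representation are $VHV$-shaped and $V_2 := V(G)\setminus V_1$. Restricting the original representation to $V_2$ then immediately produces a single-vertical $B_2$-VPG representation of $G[V_2]$, since every curve there already satisfies the definition. For $G[V_1]$ I would take the restriction of the representation to the curves in $V_1$ and rotate the whole plane by $90^\circ$. A $90^\circ$ rotation is an isometry that preserves orthogonality, bend counts, and all pairwise intersections, so the rotated picture is still a $B_2$-VPG representation of the same induced subgraph $G[V_1]$. The point is that the rotation swaps the roles of horizontal and vertical, so every $VHV$-curve becomes an $HVH$-curve (one central vertical with a horizontal attached to each endpoint), which is single-vertical. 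Hence $G[V_1]$ also admits a single-vertical $B_2$-VPG representation, and the two-set partition $(V_1,V_2)$ has the required property.

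I do not expect a genuine obstacle: the lemma reduces to the finite shape-classification of $B_2$-VPG curves together with rotational symmetry. The only points needing a bit of care are to verify that the list above really exhausts all $B_2$-VPG bend patterns (including degenerate ones with a zero-length segment), and to check that after the $90^\circ$ rotation the object associated with a former $VHV$-curve matches the paper's definition of a single-vertical object, i.e.\ a connected union of one vertical segment and finitely many horizontal segments. Both checks are routine.
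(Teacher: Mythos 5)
Your proposal is correct and takes essentially the same approach as the paper: the paper also partitions the vertices into those whose curves have at most one vertical segment and those (necessarily $VHV$-shaped) with two, and handles the latter by a $90^\circ$ rotation. Your explicit enumeration of bend patterns is just a more detailed version of the paper's observation that a curve with at most three segments and at least two vertical ones has at most one horizontal segment.
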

\begin{proof}
Fix a $B_2$-VPG-representation of $G$.
Let $V_v$ be the vertices that have at most one vertical
segment in their curve, and $V_h$ be the remaining vertices.  Since
every curve has at most three segments, and all curves in $V_h$ have at
least two vertical segments, each of them has at most one horizontal segment.
Clearly $V_v$ induces a {\singlevertical} graph.
$V_h$ {\em also}
induces a {\singlevertical} graph, because we can rotate all curves by 
$90^\circ$ and then have at most one vertical segment per curve.  
\qed
\end{proof}

Combining this with Theorem~\ref{thm:main}, we immediately obtain:

\begin{corollary}
\label{thm:B2VPGouterstring}
Let $G$ be a $B_2$-VPG graph.  Then the vertices of $G$
can be partitioned into at most $\max\{1,4\log n\}$%
sets such that the subgraph
induced by each is an outerstring graph.
\end{corollary}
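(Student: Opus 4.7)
My plan is to simply compose the two preceding results. First I would apply Lemma~\ref{lem:B2VPGsinglevertical} to a fixed $B_2$-VPG representation of $G$, obtaining a partition $V(G)=V_1\cup V_2$ so that each $V_i$ induces a \singlevertical{} graph $G_i$ on some $n_i$ vertices, with $n_1+n_2=n$. Second, I would apply Theorem~\ref{thm:main} independently to $G_1$ and to $G_2$, refining $V_i$ into at most $\max\{1,2\log n_i\}$ classes, each of which induces an outerstring graph. Taking the union of these two refinements yields the desired vertex partition of $G$: every class in the final partition is entirely contained in either $V_1$ or $V_2$, so the subgraph it induces in $G$ coincides with the subgraph it induces in $G_i$, which Theorem~\ref{thm:main} already guarantees to be outerstring.

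For the counting, since $n_i\le n$, each application of Theorem~\ref{thm:main} contributes at most $\max\{1,2\log n\}$ classes, and the two applications together give at most $2\max\{1,2\log n\}$. When $n\ge 2$ we have $2\log n\ge 2\ge 1$, so this bound equals $4\log n=\max\{1,4\log n\}$; when $n=1$ the graph is trivially outerstring and one class suffices, again matching $\max\{1,4\log n\}=1$.

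I do not anticipate any real obstacle: the argument is a direct composition and requires no further geometric insight beyond what is already packaged in Lemma~\ref{lem:B2VPGsinglevertical} and Theorem~\ref{thm:main}. The only point worth emphasising is that we are forming a \emph{vertex} partition into \emph{induced} subgraphs, so edges between $V_1$ and $V_2$ are simply absent from every part in the refinement and need no separate treatment; this is what makes the two-step composition legal and the factor-of-two loss (relative to Theorem~\ref{thm:main}) exactly accounted for.
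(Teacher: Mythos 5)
Your proposal is correct and is exactly the argument the paper intends: the paper states the corollary with only the remark ``Combining this with Theorem~\ref{thm:main}, we immediately obtain,'' and your composition of Lemma~\ref{lem:B2VPGsinglevertical} with Theorem~\ref{thm:main}, together with the bookkeeping that each of the two parts contributes at most $\max\{1,2\log n\}$ classes, is precisely that combination. No gaps.
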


\iffalse
Because every vertex has $O(1)$ segments, we have $O(n)$ segments in total,
and hence obtain the following:

\begin{corollary}
\label{cor:4log}
There exists an $O(n^3)$-time $4\log n$-approximation for weighted
independent set in $B_2$-VPG graphs.

There exists an $O(n^3\log n)$-time $O(\log^2 n)$-approximation for 
colouring $B_2$-VPG graphs.
\end{corollary}
\fi

\section{Decomposing into co-comparability graphs}

We now show that by doing further splits, we can actually decompose
$B_2$-VPG graphs into so-called co-comparability graphs of poset
dimension 3 (defined formally below).  While we require more subgraphs
for such a split, the advantage is that numerous problems are polynomial
for such co-comparability graphs, while for outerstring we know of no
problem other than weighted independent set that is poly-time solvable.

We first give an outline of the approach.  Given a $B_2$-VPG-graph,
we first use Lemma~\ref{lem:B2VPGsinglevertical} to split it into
two {\singlevertical} $B_2$-VPG-graphs.  Given a {\singlevertical}
$B_2$-VPG-graph, we next use a technique much like the one of
Theorem~\ref{thm:main} to split it into $\log n$ {\singlevertical}
$B_2$-VPG-graphs that are ``centered'' in some sense.  Any such
graph can easily be edge-partitioned into two $B_1$-VPG-graphs
that are ``grounded'' in some sense.  We then apply the technique
of Theorem~\ref{thm:main} again (but in the other direction) to
split a grounded $B_1$-VPG-graph into $\log n$ $B_1$-VPG-graphs
that are ``cornered'' in some sense.  The latter graphs can
be shown to be permutation graphs.   This gives the result after
arguing that the edge-partition can be un-done at the cost of
combining permutation graphs into co-comparability graphs.

\subsection{Co-comparability graphs}

We start by defining the graph classes that we use in this section only.
A graph $G$ with vertices $\{1,\dots,n\}$ is called a {\em permutation
graph} if there exists two permutations $\pi_1,\pi_2$ of $\{1,\dots,n\}$ such
that $(i,j)$ is an edge of $G$ if and only if $\pi_1$ lists $i,j$ in the
opposite order as $\pi_2$ does.  Put differently, if we place $\pi_1(1),\dots,\pi_1(n)$
at points along a horizontal line, and $\pi_2(1),\dots,\pi_2(n)$ at points along a
parallel horizontal line, and use the line segment $(\pi_1(i),\pi_2(i))$
to represent vertex $i$, then the graph is the intersection graph of these segments.
%Clearly hence a permutation graph is an outer-string graph, but it is much more
%specialized than that.

A {\em co-comparability graph} $G$ is a graph whose complement can be directed
in an acyclic transitive fashion.  Rather than defining these terms, we
describe here only the restricted type of co-comparability graphs that we are 
interested in.  A graph $G$ with vertices $\{1,\dots,n\}$
is called a {\em co-comparability graph of poset dimension $k$}
if there exist $k$ permutations $\pi_1,\dots,\pi_k$ such that $(i,j)$ is an edge
if and only if there are two permutations that list $i$ and $j$ in opposite order.  (See Golumbic et al.~\cite{GolumbicRU1983} for more on these
characterizations.)
Note that a permutation graph is a co-comparability graph of poset dimension 2.

\begin{figure}
\includegraphics[width=\textwidth]{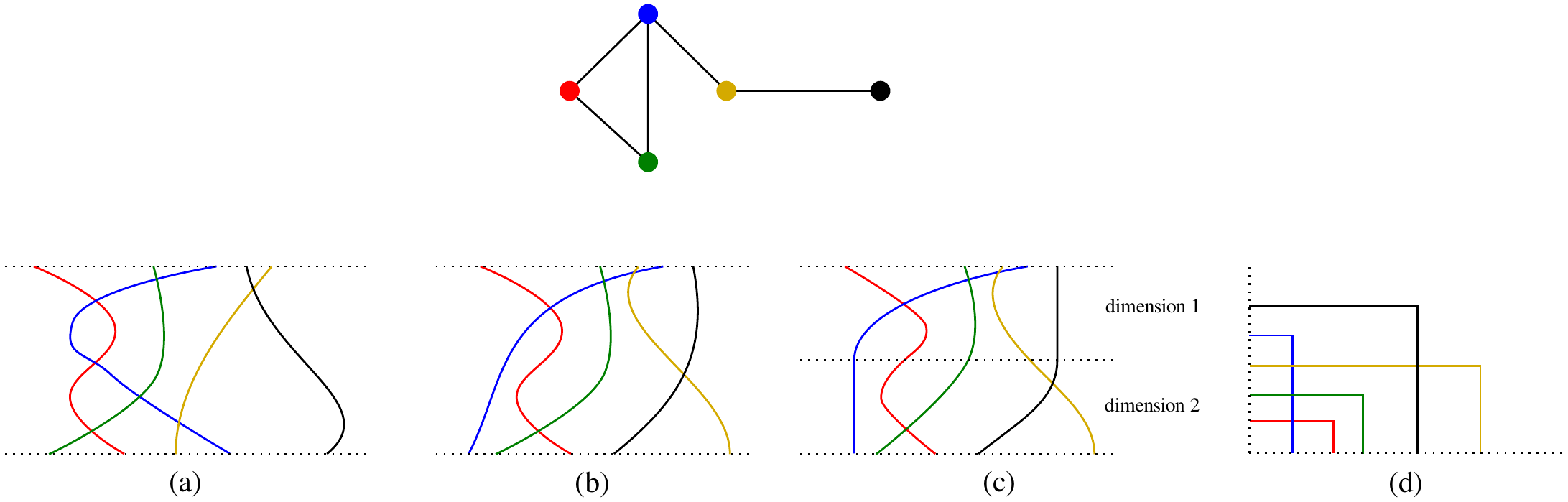}
\caption{A graph that is simultaneously (a) a co-comparability graph; (b) a permutation graph;
(c) a co-comparability graph of poset dimension 2; and (d) a cornered $B_1$-VPG graph.}
\label{fig:kinds}
\end{figure}

\subsection{Cornered $B_1$-VPG graphs}

A $B_1$-VPG-representation is called {\em cornered} if there exists
a horizontal and a vertical ray emanating from the same point such that
any curve of the representation intersects both rays.  See 
Fig.~\ref{fig:kinds}(d) for an example.
%Sometimes we distinguish
%this further by which quadrant is enclosed by the two rays; if (say) the
%rays point in positive $x$-direction and positive $y$-direction, then

\begin{lemma}
\label{lem:permutation}
If $G$ has a cornered $B_1$-VPG-representation, say with
respect to rays $r_1$ and $r_2$, then
$G$ is a permutation graph.    Further, the two permutations
defining $G$ are exactly the two orders in which vertex-curves 
intersect $r_1$ and $r_2$.
\end{lemma}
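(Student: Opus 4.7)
The plan is to normalize coordinates so the corner lies at the origin, with $r_1$ the positive $x$-axis and $r_2$ the positive $y$-axis; other orientations of the two rays are handled by reflection. For each curve $c$, I first run a short case analysis on its shape. Because $c$ has at most one bend and must cross both rays, I would argue that $c$ is forced to be an $L$-shape whose bend $(v_c,h_c)$ lies in the closed first quadrant, whose horizontal arm extends leftward from the bend across $r_2$ at $(0,h_c)$, and whose vertical arm extends downward from the bend across $r_1$ at $(v_c,0)$. A degenerate single-segment curve is captured by allowing $v_c=0$ or $h_c=0$. In particular, each curve has a well-defined $v$-intercept on $r_1$ and $h$-intercept on $r_2$.

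Next, I define $\pi_1$ to list the curves in the order their intercepts appear along $r_1$ (by increasing $v_c$ as one walks away from the corner), and $\pi_2$ analogously along $r_2$ (by increasing $h_c$). By the paper's definition of a permutation graph, it then suffices to show that two curves $c_i,c_j$ intersect in the plane if and only if $\pi_1$ and $\pi_2$ list them in opposite relative orders; this establishes both the lemma and the description of the two defining permutations.

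The heart of the proof is a short geometric case analysis. Assume $v_i<v_j$. If additionally $h_i<h_j$, then $c_i$ nests inside $c_j$ relative to the corner: the two horizontal arms lie at distinct heights and neither one reaches the opposite vertical arm (the horizontal arm of $c_i$ terminates at $x=v_i<v_j$, and the horizontal arm of $c_j$ is above the vertical arm of $c_i$, which terminates at $y=h_i<h_j$), so $c_i\cap c_j=\emptyset$; meanwhile $\pi_1$ and $\pi_2$ both list $i$ before $j$. If instead $h_i>h_j$, then the horizontal arm of $c_j$, at height $h_j<h_i$, runs past $x=v_i$ where it meets the vertical arm of $c_i$ at the point $(v_i,h_j)$, producing an intersection; and now $\pi_1$ lists $i$ before $j$ while $\pi_2$ lists $j$ before $i$. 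This nesting-versus-crossing dichotomy is exactly the permutation-graph condition.

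The main obstacle will be the bookkeeping for ties $v_i=v_j$ or $h_i=h_j$, together with truly degenerate (single-segment) curves. These cause no real trouble: two curves with a common $v$-coordinate share a vertical line and hence intersect (and analogously for a common $h$-coordinate), so I would break ties in $\pi_1$ and $\pi_2$ in opposite relative orders, preserving the iff statement; single-segment curves appear as the natural limiting case of the $L$-analysis. Beyond this minor bookkeeping, the geometric dichotomy above is all that the proof requires.
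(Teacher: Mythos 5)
Your proposal is correct and takes essentially the same approach as the paper: the paper's (much terser) proof likewise observes that the two ray-crossings determine each one-bend curve and that two curves intersect exactly when their orders along $r_1$ and $r_2$ disagree. Your version simply makes explicit the coordinate normalization, the nesting-versus-crossing case analysis, and the tie-breaking, none of which the paper spells out.
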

\begin{proof}
Since the curves have only one bend, the intersections with $r_1$ and $r_2$
determine the curve of each vertex.
In particular, two curves
intersect if and only if the two orders along $r_1$ and $r_2$ is {\em not}
the same, which is to say, if their orders are different in the two
permutations of the vertices defined by the orders along the rays.  Hence
using these orders show that $G$ is a permutation graph.
\qed
\end{proof}

\subsection{From grounded to cornered}

We call a $B_1$-VPG representation \emph{grounded} if there exists a 
horizontal line segment $\ell_H$ that intersects the all curves, and 
has all horizontal segments of all curves above it.
See also Fig.~\ref{fig:combine} and \cite{CFM+16} for more
properties of graphs that have a grounded representation.
We now show how to split a grounded $B_1$-VPG-representation into 
cornered ones.    It will be important later that not only can we
do such a split, but we know how the curves intersect $\ell_H$ afterwards.
More precisely, the curves in the resulting representations may not be
identical to the ones we started with, but they are modified only in
such a way that the intersections points of curves along $\ell_H$ is
unchanged.

\begin{lemma}
\label{lem:main_ground}
Let $R$ be a $B_1$-VPG-representation that is grounded with 
respect to segment $\ell_H$.
Then $R$ can be partitioned into at most $\max\{1,2\log n\}$ sets 
$R_1,\dots,R_K$ such that each set $R_i$ is cornered after upward translation
and segment-extension of some of its curves.
\end{lemma}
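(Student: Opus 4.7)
The plan is to proceed by induction on $n$, using a median-split analogous to Theorem~\ref{thm:main} but producing \emph{cornered} pieces. For $n\le 1$ the single curve is trivially cornered.

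For $n\ge 2$, I would take the median $m$ of the $x$-coordinates at which the curves cross $\ell_H$, let \m\ be the vertical line $x=m$, and split the curves into four classes: $C_R$ (curves with $x_v\ge m$ that intersect \m), $C_L$ (curves with $x_v<m$ that intersect \m), $L$ (curves with $x_v<m$ lying strictly left of \m), and $R$ (curves with $x_v>m$ lying strictly right of \m). The key observation is that $C_R$ is already cornered at the point $(m,y_{\ell_H})$ with rays going up and to the right---every such curve meets $\ell_H$ at some $x\ge m$ and meets \m\ above $\ell_H$---and symmetrically $C_L$ is cornered at the same point with rays going up and to the left. The sets $L$ and $R$ each contain at most $n/2$ curves and are vertex-disjoint with no edges crossing between them, since they are separated by \m.

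Next I would recurse on $L$ and $R$, each of which is itself a grounded $B_1$-VPG representation, carrying through a labelling of every output cornered set as \emph{up-right} or \emph{up-left} according to its ray directions. To stay within the $2\log n$ bound, I would pair each up-right cornered set $A$ from the $L$-recursion with an up-right cornered set $B$ from the $R$-recursion and merge them into a single up-right cornered set; symmetrically for up-left sets. For the merge I take $A$'s corner $(x_A,y_{\ell_H})$ as the common corner and leave $A$ untouched, while translating every curve of $B$ upward by a large constant $\Delta$, extending its vertical segment back down to $\ell_H$, and extending its horizontal segment leftward to reach $x=x_A$.

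The main obstacle will be verifying that this translation-and-extension preserves the intersection graph of the combined set. Within $B$, uniform upward translation preserves all incidences, and extending horizontals further to the left is harmless because the cornered structure of $B$ already forces every horizontal to extend past $x_B>x_A$, so the new leftward portion lies at $x$-coordinates smaller than every $x_v$ in $B$ and therefore cannot hit any vertical segment of $B$. Between $A$ and $B$ there were no intersections originally---they lie on opposite sides of \m---and choosing $\Delta$ larger than the highest bend of $A$ places the translated horizontals of $B$ above every vertical top of $A$, so no new edge is created. Counting the output, this combining scheme gives $T_R(n)\le T_R(n/2)+1$ and $T_L(n)\le T_L(n/2)+1$ for the numbers of up-right and up-left sets, so the total is at most $2\log n$ cornered sets as required.
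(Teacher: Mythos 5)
Your proof is correct and uses essentially the same approach as the paper: the same median split on the $x$-coordinates of the vertical segments, the same observation that the curves crossing the median line form a cornered set at its intersection with $\ell_H$, and the same merging of the left and right recursive outputs via upward translation and segment extension back to $\ell_H$. The only difference is bookkeeping---the paper first separates the curves into the two possible L-shapes and runs one recursion per shape, whereas you run a single recursion that extracts both an up-left and an up-right cornered set at each level and pairs like-oriented sets when merging; both yield the $2\log n$ bound.
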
 
\begin{proof}
A single curve with one bend is always cornered, so the claim is easily
shown for $n\leq 4$ where $\max\{1,2\log n\}\geq n$.  For $n\geq 5$,
it will be helpful to split $R$ first
into two sets, those curves of the form {\scriptsize\textifsym{|h}} and those that form
{\scriptsize\textifsym{h|}} (no other shapes can exist in a grounded $B_1$-VPG-representation).
%\todo{switch {\scriptsize\textifsym{h|}} and {\scriptsize\textifsym{|h}} to something that looks like this but is as big as a regular letter.}
The result follows if we show that each of them can be split into
$\log n$ many cornered $B_1$-VPG-representations.

So assume that $R$ consists of only {\scriptsize\textifsym{|h}}'s.
We apply essentially the same idea
as in Theorem~\ref{thm:main}.  
Let again {\m} be the
vertical line along the median of $x$-coordinates of vertical
segments of curves.  Let $M$ be all those curves that intersect {\m}.
Since curves are {\scriptsize\textifsym{|h}}'s, 
any curve in $M$ intersects $\ell_H$ to the left of {\m},
and intersects {\m} above $\ell_H$.  
Hence taking the two rays along $\ell_H$ and {\m}
emanating from their common point shows that $M$ is cornered.
%\todo{picture for $M$, and for how to combine $G_L$ and $G_R$}

\begin{figure}
\includegraphics[width=\textwidth]{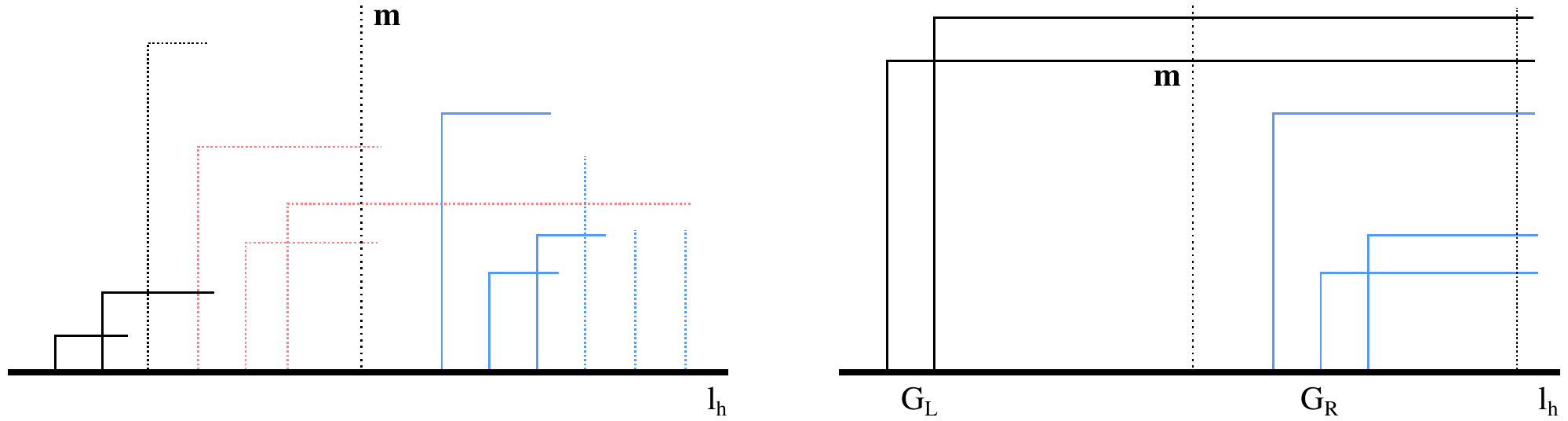}
\caption{An illustration for the proof of Lemma~\ref{lem:main_ground}.
(left) Splitting a cornered $B_1$-VPG graph. (right) Combining a graph $G_L$ with a graphs $G_R$
so that the result is a cornered $B_1$-VPG graph.}
\label{fig:combine}
\end{figure}

%\todo{TB: Not too happy with this picture; it does not illustrate
%exactly how curves got changed.  Could you change the left picture
%so that the curves of $G_R$ are almost exactly the same (only they
%haven't been extended), and the curves of $G_L$ are almost the
%same, except they've been translated up and extended?
%(If you're not sure what I mean then hold off on it until I can
%draw you a sketch.)
%MD: Yes, I wanted to discuss this one too. I know what you are saying. The only
%problem is that we need to merge two independent cornered subgraphs that pop out of the
%recursion, not the entire left and right part---note that the black curves on the
%left side cannot be all extended to intersect one a vertical segement as one of them is ``locked''.
%Thus we need to extend a selection only.}

We then recurse both in the subgraph $L$ of vertices entirely left of {\m}
and the subgraph $R$ of vertices entirely right of {\m}.  Each of them is split
recursively into at most $\max\{1,\log (n/2)\}=\log n - 1$ 
subgraphs that are cornered.
We must now argue how to combine two such subgraphs $G_L$ and $G_R$ (of
vertices from $L$ and $R$) such that they are cornered while modifying
curves only in the permitted way.

Translate curves of $G_L$ upward such that the lowest
horizontal segment of $G_L$ is above the highest horizontal segment
of $G_R$.  Extend the vertical segments of $G_L$ so that they again
intersect $\ell_H$.  Extend horizontal segments of both $G_L$ and $G_R$ 
rightward until they all intersect one vertical line segment. 
The resulting representation satisfies all conditions.  

Since we obtain at most $\log n-1$ such cornered representations
from the curves in $R\cup L$, we can add $M$ to it and the result follows.
\qed
\end{proof}

\begin{corollary}
\label{cor:B1cornered}
Let $G$ be a graph with a grounded $B_1$-VPG representation.  
Then the vertices of $G$
can be partitioned into at most $\max\{1,2\log n\}$ 
sets such that the subgraph
induced by each is a permutation graph.
\end{corollary}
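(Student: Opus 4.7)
The plan is to chain together the two lemmas immediately preceding the corollary. Given a grounded $B_1$-VPG representation $R$ of $G$, I first apply Lemma~\ref{lem:main_ground} to partition $R$ into at most $\max\{1,2\log n\}$ sets $R_1,\dots,R_K$, each of which becomes cornered after the allowed upward translations and segment extensions. This partition of the representation induces a vertex partition of $G$ into sets $V_1,\dots,V_K$, where $V_i$ consists of the vertices whose curves lie in $R_i$.

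Next, for each $i$, I would observe that $R_i$ (after the modifications guaranteed by Lemma~\ref{lem:main_ground}) is a cornered $B_1$-VPG representation of the induced subgraph $G[V_i]$. Here one must note that the permitted modifications, namely upward translations of curves and extending segments, do not alter the intersection graph of the curves in $R_i$, because the geometric rearrangements in the proof of Lemma~\ref{lem:main_ground} are designed to preserve all intersections among the curves of $R_i$ (they are carried out precisely so that previously intersecting curves continue to meet after the translation and extension, while no new intersections between curves of $R_i$ are introduced). Hence $G[V_i]$ has a cornered $B_1$-VPG representation.

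Then by Lemma~\ref{lem:permutation}, each such $G[V_i]$ is a permutation graph, with the two defining permutations given by the orders in which the curves of $R_i$ meet the two rays witnessing the cornered property. Combining this with the bound $K \le \max\{1,2\log n\}$ from Lemma~\ref{lem:main_ground} yields the corollary.

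The only real subtlety, and the step I would verify most carefully, is the claim that the modifications allowed in Lemma~\ref{lem:main_ground} preserve the induced subgraph $G[V_i]$ rather than just yielding some cornered representation of a possibly different graph. This should follow directly from the construction in the proof of that lemma, since the upward translations and rightward segment extensions are applied uniformly to blocks of curves that were already pairwise separated by the vertical line \m, but it is worth spelling out to make the chaining rigorous.
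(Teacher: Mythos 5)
Your proposal is correct and matches the paper's intent exactly: the paper states this corollary without a separate proof, treating it as the immediate consequence of chaining Lemma~\ref{lem:main_ground} (the split into at most $\max\{1,2\log n\}$ cornered representations) with Lemma~\ref{lem:permutation} (cornered implies permutation graph). Your extra care about the translations and extensions preserving the induced subgraph is a reasonable point to flag, and it is indeed guaranteed by the construction in Lemma~\ref{lem:main_ground}.
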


\subsection{From centered to grounded}

We now switch to VPG-representations with 2 bends, but currently only
allow those with a single vertical segment per curve.  So let $R$ be a {\singlevertical}
$B_2$-VPG-representation.  We call $R$ {\em centered} if there exists a
horizontal line segment $\ell_H$
that intersects the vertical segment of all curves.
Given such a representation, we can cut each curve apart at the intersection
point with $\ell_H$.  Then the parts above $\ell_H$ form a grounded
$B_1$-VPG-representation, and the parts below form (after a 180$^\circ$
rotation) also a grounded $B_1$-VPG-representation.   Note that this
split corresponds to splitting the edges into $E=E_1\cup E_2$, depending
on whether the intersection for each edge occurs above or below $\ell_H$.
Note that if curves may intersect repeatedly, then an edge may be in both sets. See 
Fig.~\ref{fig:center} for an example.
\iffalse
We summarize this
for future reference:

\begin{observation}
Let $G=(V,E)$ be a graph with a {\singlevertical} $B_2$-VPG-representation $R$
that is centered at line segment $\ell_H$.
Then $G$ can be edge-partitioned into $G_1=(V,E_1)$ and $G_2=(V,E_2)$
such that each $G_i$ has a $B_1$-VPG-representation $R_i$ grounded at $\ell_H$,
with the same order intersection points along $\ell_H$ for $R$ and $R_i$.
\end{observation}
\fi
%\todo{figure}
With this, we can now split into co-comparability graphs.

\begin{figure}
\centering
\includegraphics[width=.8\textwidth]{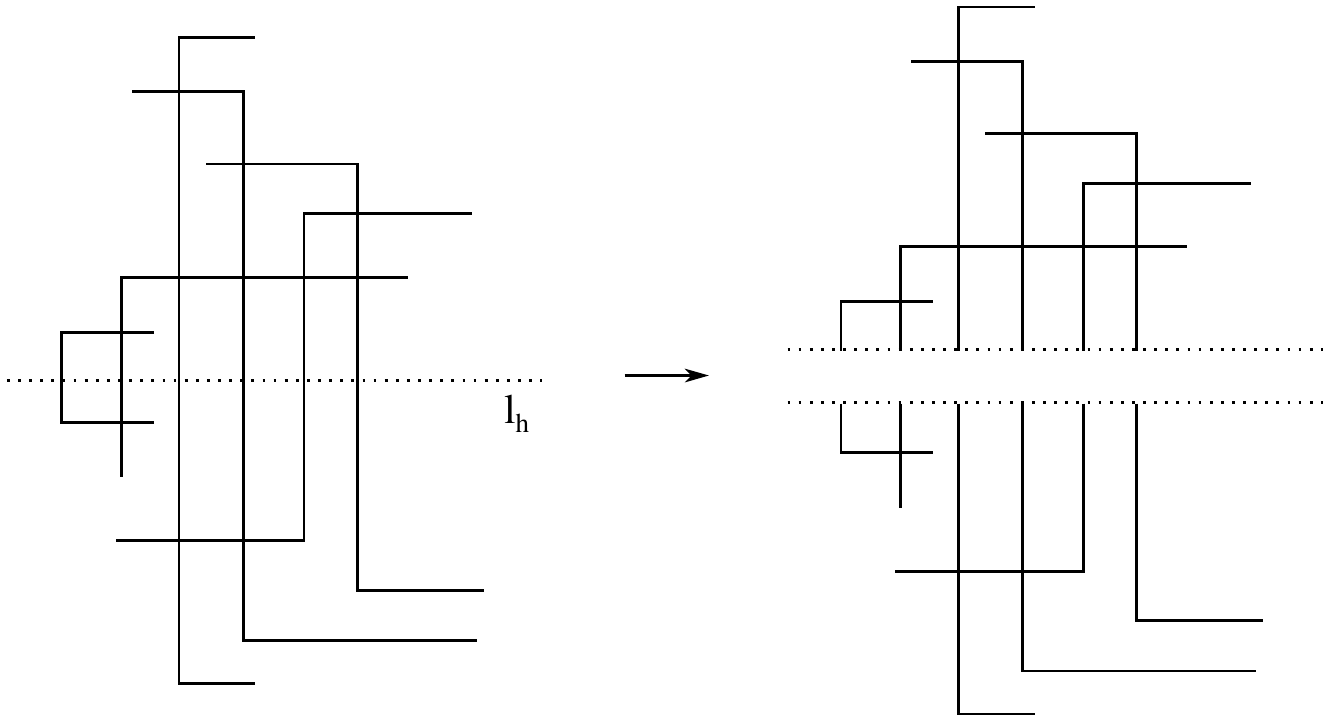}
\caption{Splitting singlevertical $B_2$-VPG-representation into two grounded $B_1$-VPG-representations.}
\label{fig:center}
\end{figure}

\begin{lemma}
\label{lem:cocompar}
%Any graph $G$ that has a {\singlevertical} centered $B_2$-VPG-representation $R$
%can be split into $4\log^2 n$ co-comparability graphs of poset dimension 3.
Let $G$ be a graph with a {\singlevertical} centered $B_2$-VPG representation.  
Then the vertices of $G$
can be partitioned into at most $\max\{1,4\log^2 n\}$ 
sets such that the subgraph
induced by each is a co-comparability graph of poset dimension 3.
\end{lemma}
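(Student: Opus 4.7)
The plan is to realize the outline sketched just before the lemma: edge-partition $G$ at $\ell_H$, apply Corollary~\ref{cor:B1cornered} to each half, and then intersect the two resulting vertex partitions. The geometric linchpin is that both the cut at $\ell_H$ and the modifications permitted in Lemma~\ref{lem:main_ground} preserve the linear order in which the curves attach to $\ell_H$; this shared order is what will glue two poset-dimension-$2$ orderings into a single poset-dimension-$3$ realization.

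First I would cut every curve at its unique intersection with $\ell_H$. The upper pieces form a grounded $B_1$-VPG-representation of $G^+=(V,E^+)$, where $E^+$ collects those edges of $G$ whose witnessing intersection lies weakly above $\ell_H$; the reflected lower pieces form a grounded $B_1$-VPG-representation of $G^-=(V,E^-)$; and $E=E^+\cup E^-$. Both representations induce the same order $\sigma$ of $V$ along $\ell_H$. Applying Corollary~\ref{cor:B1cornered} to each yields vertex partitions $V=V^+_1\cup\dots\cup V^+_p$ and $V=V^-_1\cup\dots\cup V^-_q$ with $p,q\le\max\{1,2\log n\}$ such that every $G^\pm[V^\pm_i]$ has a cornered $B_1$-VPG-representation. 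Because the transformations in the proof of Lemma~\ref{lem:main_ground} only translate and extend segments while leaving the attachments to $\ell_H$ in place, I may arrange that the horizontal ray of each cornered representation lies along $\ell_H$; by Lemma~\ref{lem:permutation} the two permutations defining $G^+[V^+_i]$ are then $\sigma|_{V^+_i}$ and some vertical-ray order $\tau^+_i$, and analogously for $G^-[V^-_j]$ with $\tau^-_j$.

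Next I would form the common refinement $W_{i,j}:=V^+_i\cap V^-_j$, producing at most $pq\le\max\{1,4\log^2 n\}$ classes. For any $u,v\in W_{i,j}$, an edge $\{u,v\}$ of $G$ belongs to $E^+$ iff $\sigma$ and $\tau^+_i$ list $u,v$ in opposite order, and to $E^-$ iff $\sigma$ and $\tau^-_j$ do. Since $E=E^+\cup E^-$, this says precisely that some pair among the three permutations $\sigma|_{W_{i,j}},\tau^+_i|_{W_{i,j}},\tau^-_j|_{W_{i,j}}$ disagrees on $\{u,v\}$, i.e.\ $G[W_{i,j}]$ is a co-comparability graph of poset dimension~$3$.

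The one delicate point I foresee is the compatibility claim: I must argue carefully that the ``horizontal'' permutation extracted from each of the two independent applications of Corollary~\ref{cor:B1cornered} can be taken to be literally the same $\sigma$. This reduces to re-reading Lemma~\ref{lem:main_ground} to confirm that its upward (or downward, after reflection) translations and segment extensions never permute the curves' attachments to $\ell_H$, which is exactly the invariant emphasized in the paragraph preceding that lemma. Once that is in hand, the three-permutation characterization of poset-dimension-$3$ co-comparability graphs from the previous subsection closes the argument, and the counting $pq\le 4\log^2 n$ is immediate.
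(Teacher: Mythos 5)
Your proposal is correct and follows essentially the same route as the paper: cut each curve at $\ell_H$ to edge-partition $G$ into two grounded $B_1$-VPG graphs sharing the order $\sigma$ along $\ell_H$, split each into at most $2\log n$ cornered representations via Lemma~\ref{lem:main_ground}/Corollary~\ref{cor:B1cornered}, take the common refinement of the two vertex partitions, and read off three permutations (vertical-up, $\ell_H$, vertical-down) whose shared middle permutation yields poset dimension 3. The ``delicate point'' you flag --- that the attachment order along $\ell_H$ is preserved by the translations and extensions --- is exactly the invariant the paper states before Lemma~\ref{lem:main_ground} and uses as $\pi_2=\pi_2'$ in its proof.
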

\begin{proof}
The claim clearly holds for $n\leq 4$, so assume $n\geq 5$.
Let $\ell_H$ be the horizontal segment along which the representation is centered.
Split the edges into $E_1$ and $E_2$ as above, and let $R_1$ and $R_2$ be the resulting
grounded $B_1$-VPG-representations, which have the same order of vertical intersections
along $\ell_H$.  Split $R_1$ into $K\leq 2\log n$ sets of curves $R_1^1,\dots,R_1^K$, each of which forms a
cornered $B_1$-VPG-representation that uses the same order of intersections
along $\ell_H$.  Similarly split $R_2$ into $K'\leq 2\log n$ sets $R_2^1,
\dots,R_2^{K'}$ of 
cornered $B_1$-VPG-representations.

Now define $R_{i,j}$ to consist of all those curves $r$ where the part of
$r$ above $\ell_H$ belongs to $R_1^i$ and the part below belongs to $R_2^j$.
This gives $K\cdot K'\leq 4\log^2 n$ sets of curves.
Consider one such set $R_{i,j}$.  The parts of curves in $R_{i,j}$ that
were above $\ell_H$ are cornered at $\ell_H$ and some vertical upward ray,
hence define a permutation $\pi_1$ along the vertical ray and $\pi_2$
along $\ell_H$. 
Similarly
the parts of curves below $\ell_H$ define two permutations, say $\pi_2'$ 
along $\ell_H$ and $\pi_3$ along some vertical downward ray.
But the split into cornered $B_1$-VPG-representation ensured that the
intersections along $\ell_H$ was not changed, so $\pi_2=\pi_2'$.
The three permutations $\pi_1,\pi_2,\pi_3$ together hence define a 
co-comparability graph of poset dimension 3 as desired.
\qed
\end{proof}

We can do slightly better if the representation is additionally 1-string.

\begin{corollary}
\label{cor:permutation}
Let $G$ be a graph with a {\singlevertical} centered 1-string 
$B_2$-VPG representation.  Then the vertices of $G$
can be partitioned into at most $\max\{1,4\log^2 n\}$ 
sets such that the subgraph
induced by each is a permutation graph.
\end{corollary}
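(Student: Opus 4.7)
The plan is to re-run the construction from the proof of Lemma~\ref{lem:cocompar} and then exploit the 1-string hypothesis to collapse the three defining permutations of each piece down to two. Concretely, I apply the proof of Lemma~\ref{lem:cocompar} verbatim to obtain a partition into at most $4\log^2 n$ sets $R_{i,j}$, each equipped with three permutations $\pi_1,\pi_2,\pi_3$: the orders along the upward vertical ray used above $\ell_H$, along $\ell_H$ itself, and along the downward vertical ray used below $\ell_H$. The claim I need is that, under the 1-string hypothesis, the subgraph induced by each $R_{i,j}$ is already the permutation graph defined by just $\pi_1$ and $\pi_3$.

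Fix $R_{i,j}$ and two of its vertices $u,v$ with curves $c_u, c_v$, and split each at its unique intersection with $\ell_H$ as $c_u = c_u^{up}\cup c_u^{dn}$ and similarly for $c_v$. After a standard general-position perturbation (which changes neither the graph nor the three orders), the vertical segments of $c_u$ and $c_v$ have distinct $x$-coordinates, so $c_u\cap c_v\cap \ell_H=\emptyset$ and hence $c_u\cap c_v = (c_u^{up}\cap c_v^{up})\cup(c_u^{dn}\cap c_v^{dn})$. Applying Lemma~\ref{lem:permutation} to the cornered $B_1$-VPG pieces above and below $\ell_H$, the upper parts cross if and only if $\pi_1$ and $\pi_2$ disagree on $\{u,v\}$, and the lower parts cross if and only if $\pi_2$ and $\pi_3$ disagree on $\{u,v\}$. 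The 1-string hypothesis $|c_u\cap c_v|\leq 1$ forces at most one of these two disagreements to hold, and $uv$ is an edge of $G$ if and only if \emph{exactly} one of them does. A short case analysis (assume WLOG $u$ precedes $v$ in $\pi_2$ and look at the four sub-cases determined by the relative orders in $\pi_1$ and $\pi_3$) shows that ``exactly one of the pairs $(\pi_1,\pi_2)$ and $(\pi_2,\pi_3)$ disagrees on $\{u,v\}$'' is logically equivalent to ``$\pi_1$ and $\pi_3$ disagree on $\{u,v\}$''. Therefore the induced subgraph on $R_{i,j}$ is precisely the permutation graph determined by $\pi_1$ and $\pi_3$.

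The only obstacle is the bookkeeping around $\ell_H$: a priori, two curves whose vertical segments share an $x$-coordinate could intersect along $\ell_H$ itself, which would not be captured by either the ``upper cross'' or ``lower cross'' contribution. This is handled by the general-position perturbation, since neither the graph nor the three induced permutations change under infinitesimal horizontal shifts of the vertical segments. Everything else is routine: Lemma~\ref{lem:cocompar} supplies the partition and the three permutations, Lemma~\ref{lem:permutation} interprets each cornered $B_1$-VPG piece via pairwise orders along its rays, and the XOR identity collapses the three permutations down to two, yielding the claimed bound of $\max\{1,4\log^2 n\}$ permutation subgraphs.
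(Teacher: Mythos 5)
Your proposal is correct and follows essentially the same route as the paper: re-use the split from Lemma~\ref{lem:cocompar}, and use the 1-string hypothesis to rule out a simultaneous crossing above and below $\ell_H$, which forces $\pi_1$ and $\pi_3$ alone to determine the edges. Your XOR formulation and the general-position remark are just slightly more explicit versions of the paper's case analysis.
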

\begin{proof}
The split is exactly the same as in Lemma~\ref{lem:cocompar}.
Consider one of the subgraphs $G_i$ and the
permutations $\pi_1,\pi_2,\pi_3$ that came with it, where $\pi_2$ is
the permutation of curves along the centering line $\ell_H$.
We claim that $G_i$ is a permutation graph, using $\pi_1,\pi_3$ as the 
two permutations.   Clearly if $(u,v)$ is not an edge of $G_i$,
then all of $\pi_1,\pi_2,\pi_3$ list $u$ and $v$ in the same order.
If $(u,v)$ is an edge of $G_i$, then two of $\pi_1,\pi_2,\pi_3$ list
$u,v$ in opposite order.  We claim that $\pi_1$ and $\pi_3$ list $u,v$
in opposite order.  For if not, say $u$ comes before $v$ in both $\pi_1$
and $\pi_3$, then (to represent edge $(u,v)$) we must have $u$ after
$v$ in $\pi_2$.  But then the curves of $u$ and $v$ intersect both above
and below $\ell_H$, contradicting that we have a 1-string representation.
So the two permutations $\pi_1,\pi_3$ define graph $G_i$.
%\todo{Rephrased proof, but essence is the same.}
\qed
\end{proof}

\subsection{Making {\singlevertical} $B_2$-VPG-representations centered}

\begin{lemma}
\label{lem:centered}
%Let $R$ be a {\singlevertical} $B_2$-VPG-representation.  Then we can split
%$R$ into $\max\{1,\log n}$ {\singlevertical} $B_2$-VPG-representations that are centered.
Let $G$ be a graph with a {\singlevertical} $B_2$-VPG representation.  Then the vertices of $G$
can be partitioned into at most $\max\{1,\log n\}$ 
sets such that the subgraph
induced by each has a {\singlevertical} centered $B_2$-VPG-representation.
\end{lemma}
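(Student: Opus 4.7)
My plan is to adapt the median-splitting of Theorem~\ref{thm:main} but slice along a horizontal rather than a vertical line, recursing only on the two vertically separated halves. I would induct on $n$; in the base case $n\leq 2$ the subgraph can be redrawn as one or two parallel verticals sharing a common centering segment (with a horizontal cross-piece added if there is an edge), so one centered representation suffices.

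For $n\geq 3$, for each curve $c$ let $\mu(c)$ denote the $y$-midpoint of its unique vertical segment and let $y^*$ be the median of the multiset $\{\mu(c):c\in V\}$. Take $\ell_H$ to be a sufficiently long horizontal segment at height $y^*$, and partition the curves into $M$ (vertical crosses $\ell_H$), $A$ (vertical lies entirely above $\ell_H$), and $B$ (vertical lies entirely below). Since a vertical entirely above $\ell_H$ has both endpoints---and hence its midpoint---strictly above $y^*$, the median property gives $|A|\leq n/2$, and symmetrically $|B|\leq n/2$. The subrepresentation on $M$ is centered at $\ell_H$ by construction.

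The observation that allows the recursion is that, because each curve has at most two bends and a single vertical segment, its horizontal pieces must attach at the endpoints of that vertical (any other attachment would either branch the curve or require a third bend). Hence a curve whose vertical is entirely above $\ell_H$ lies entirely above $\ell_H$, and the analogous statement holds for $B$; in particular there are no edges between $A$ and $B$. Moreover, two centered representations of disjoint graphs can be merged into one centered representation by vertically translating the second so that its centering segment matches the first in height, horizontally translating it far enough to one side that no curve of one meets any curve of the other, and extending the centering segment to span both. Applying induction, $A$ and $B$ each decompose into at most $\log(n/2)=\log n-1$ centered subgraphs, which merge pairwise into $\log n-1$ subgraphs covering $A\cup B$; adding $M$ yields the claimed $\log n$ in total.

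The main obstacle I anticipate is verifying that the merging step introduces no new intersections between the two shifted representations; this is guaranteed by taking the horizontal shift larger than the horizontal extent of the first representation, after which the (grid-orthogonal) curves of one lie strictly to the right of all curves of the other. A minor rounding issue in the recurrence---of the same flavour as the footnote in Theorem~\ref{thm:main}---is absorbed by the $\max\{1,\log n\}$ notation.
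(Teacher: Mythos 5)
Your proposal follows essentially the same route as the paper: split along a horizontal median line into the crossing set $M$ (which is centered by construction) and the sets above and below, recurse on those two, and pair up the resulting centered pieces after translation; your use of the median of the $n$ vertical-segment midpoints in place of the paper's median of the $2n$ segment endpoints is an immaterial variation that yields the same $n/2$ bound, and your explicit horizontal shift in the merge step is if anything more careful than the paper's. The one genuine slip is the base case: starting the recursion at $n=3$ produces $M$ plus one merged set, i.e.\ $2$ sets, whereas $\max\{1,\log 3\}<2$, and this is not absorbed by the $\max$ notation; you need to verify $n=3$ directly (every $3$-vertex graph admits a centered {\singlevertical} $B_2$-VPG representation) and begin the recursion at $n\geq 4$, exactly as the paper does.
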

\begin{proof}
The approach is quite similar to the one in Theorem~\ref{thm:main}, but uses a
horizontal split and a different median.  The claim is easy to show for $n=3$,
so assume $n\geq 4$. Recall that there are are $n$
vertical segments, hence $2n$ endpoints of such segments.
Let $m$ be the value such that at most $n$ of these endpoints each are
below and above $m$, and let {\m} be the horizontal
line with $y$-coordinate $m$.

Let $M$ be the curves that are intersected by {\m}; clearly they form a 
{\singlevertical} centered $B_2$-VPG-representation.  Let $B$ be all those
curves whose vertical segment (and hence the entire curve) is completely
below {\m}.  Each such curve contributes two endpoints of vertical segments,
hence $|B|\leq n/2$ by choice of $m$.  Recursively
split $B$ into at most $\max\{1,\log (n/2)\}=\log n-1$ sets, and 
likewise split the curves $U$
above {\m} into at most $\log n-1$ sets.

Each chosen subset $G_B$ of $B$ is centered, 
as is each chosen subset $G_U$ of $U$.
Since $G_B$ uses curves below {\m} while $G_U$ uses curves above, 
there are no crossings between
these curves.  We can hence translate the curves of $G_B$ such they are 
centered with the same horizontal line as $G_U$.  
Therefore $G_B \cup G_U$ has a centered
{\singlevertical} $B_2$-VPG-representation.  Repeating this for all of $R\cup U$
gives $\log n-1$ centered {\singlevertical} $B_2$-VPG-graphs, to
which we can add the one defined by $M$.
\qed
\end{proof}

\subsection{Putting it all together}

We summarize all these results in our main result about splits into co-comparability
graphs:

\begin{theorem}
\label{thm:main2}
Let $G$ be a $B_2$-VPG-graph.
Then the vertices of $G$
can be partitioned into at most $\max\{1,8\log^3 n\}$ 
sets such that the subgraph
induced by each is co-comparability graph of poset dimension 3.
If $G$ is a 1-string $B_2$-VPG graph, then the subgraphs
are permutation graphs.
\end{theorem}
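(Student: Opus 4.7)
The plan is to simply chain together the three splits established in the preceding lemmas: the single-vertical split, the centering split, and the centered-to-co-comparability split. Each lemma has already done the geometric work, so the proof becomes a multiplicative counting argument, and the main (minor) obstacle will be bookkeeping the factors correctly and checking that the 1-string property is preserved through the chain so that Corollary~\ref{cor:permutation} can be invoked instead of Lemma~\ref{lem:cocompar} in the special case.

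First, I would apply Lemma~\ref{lem:B2VPGsinglevertical} to the given $B_2$-VPG graph $G$ to partition $V(G)$ into $2$ sets, each inducing a {\singlevertical} $B_2$-VPG graph $G^{(1)}, G^{(2)}$. Note that this step preserves the 1-string property, because it only restricts attention to a subset of the original curves. Next, to each $G^{(i)}$ I would apply Lemma~\ref{lem:centered}, obtaining a further partition into at most $\max\{1,\log n\}$ induced subgraphs, each of which admits a {\singlevertical} centered $B_2$-VPG representation. Again the 1-string property is preserved, since Lemma~\ref{lem:centered} merely translates and selects curves rather than modifying intersection patterns.

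Finally, to each of these $2\log n$ centered subgraphs I would apply Lemma~\ref{lem:cocompar}, producing at most $\max\{1, 4\log^2 n\}$ further induced subgraphs, each a co-comparability graph of poset dimension 3. In the 1-string case I would instead invoke Corollary~\ref{cor:permutation} at this last step, yielding permutation graphs. Multiplying the three bounds gives $2 \cdot \log n \cdot 4\log^2 n = 8\log^3 n$ subgraphs in the partition, which matches the claimed bound. The small-$n$ boundary cases ($n$ where one of the $\max\{1,\cdot\}$ terms kicks in) are absorbed by the outer $\max\{1, 8\log^3 n\}$.

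The only subtlety worth a sentence is that since each lemma uses the same $n$ (the number of vertices of $G$) rather than the size of the current sub-instance, the chained bounds multiply directly with no need for a recursive analysis; applying them to sub-instances of size $\leq n$ only makes the bounds smaller, which is absorbed by the $\max$. Thus the proof is essentially a one-line composition: ``apply Lemma~\ref{lem:B2VPGsinglevertical}, then Lemma~\ref{lem:centered}, then Lemma~\ref{lem:cocompar} (or Corollary~\ref{cor:permutation} in the 1-string case) to each resulting subgraph.''
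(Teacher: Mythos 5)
Your proposal is correct and follows exactly the paper's own proof: chain Lemma~\ref{lem:B2VPGsinglevertical}, then Lemma~\ref{lem:centered}, then Lemma~\ref{lem:cocompar} (or Corollary~\ref{cor:permutation} in the 1-string case), and multiply $2\cdot\log n\cdot 4\log^2 n$. The extra remarks about preservation of the 1-string property and about all bounds being stated in terms of the global $n$ are sound and only make the argument more explicit.
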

\begin{proof}
The claim is trivial for small $n$ since then $n\leq 8 \log^3 n$,
so assume $n\geq 4$.  Fix a $B_2$-VPG-representation $R$.
First split $R$ into two {\singlevertical} $B_2$-VPG-representations as in Lemma~\ref{lem:B2VPGsinglevertical}.
Split each of them into $\log n$ {\singlevertical} centered $B_2$-VPG-representations using
Lemma~\ref{lem:centered}, for a total of at most $2\log n$ sets of curves.
Split each of them into $4\log^2 n$ co-comparability graphs (or permutation graphs if
the representation was 1-string) using Lemma~\ref{lem:cocompar} or Corollary~\ref{cor:permutation}.
The result follows.
\qed
\end{proof}

We can do better for $B_1$-VPG-graphs.  The subgraphs obtained in the
result below are the same ones that were used implicitly in the
$4\log^2 n$-approximation algorithm given by Lahiri et al.\cite{cit:cocoa}.

\begin{theorem}
\label{thm:main3}
Let $G$ be a $B_1$-VPG-graph.
Then the vertices of $G$
can be partitioned into at most $\max\{1,4\log^2 n\}$ 
sets such that the subgraph
induced by each is a permutation graph.
\end{theorem}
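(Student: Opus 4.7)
The plan is to alternate the vertical median split of Theorem~\ref{thm:main} with an analogous horizontal median split on each middle set, so that every resulting piece is a cornered $B_1$-VPG subrepresentation and hence a permutation graph by Lemma~\ref{lem:permutation}. First I apply the vertical median split to $G$, obtaining subsets $L$ and $R$ of size at most $n/2$ together with middle sets $M_L, M_R$ of curves that cross the vertical median line $\m$. Because $L$ and $R$ induce no edges between them, their partitions obtained by induction can be merged pairwise into $f(n/2)$ permutation subgraphs, where $f(n)$ is the quantity we aim to bound.

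Next I partition $M_L$ (symmetrically $M_R$) into $2\log n$ permutation subgraphs. Every curve in $M_L$ has its vertical segment at some $x$-coordinate less than $m$ and its horizontal segment extending right past $\m$, so it is an L-shape whose horizontal sits either at the top of the vertical ($\Gamma$-shape; set $T_L$) or at the bottom (set $B_L$). For $T_L$ I run a recursive horizontal median split: at each level pick a horizontal line $\ell$ (the median of the $2|T_L|$ endpoints of the vertical segments) such that at most $|T_L|/2$ verticals lie entirely above $\ell$ and at most $|T_L|/2$ entirely below. The middle set of curves whose vertical crosses $\ell$ forms a cornered subrepresentation with corner point $p=\m\cap\ell$, an upward ray along $\m$, and a leftward ray along $\ell$: each such curve's horizontal lies at height at least that of $\ell$ and meets $\m$ on the upward ray, while its vertical lies at $x$-coordinate at most $m$ and meets $\ell$ on the leftward ray. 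Lemma~\ref{lem:permutation} then yields a permutation graph. Recursing on the curves entirely above and entirely below $\ell$---which share no edges due to disjoint $y$-ranges---and combining their same-level middles into one permutation graph (using that a disjoint union of permutation graphs on separated vertex sets is itself a permutation graph) produces $\log n$ permutation subgraphs for $T_L$. A symmetric argument with a downward ray along $\m$ handles $B_L$.

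Summing gives the recurrence $f(n)\leq f(n/2)+4\log n$, which solves to $f(n)\leq 2\log^2 n + 2\log n \leq 4\log^2 n$ for $n\geq 4$; trivial singleton partitions cover small $n$. The main obstacle is the bookkeeping: verifying the cornered condition for each of the four shape classes ($T_L, B_L$ and the symmetric $T_R, B_R$), choosing the correct ray directions in each case, and arguing that same-level middles from disjoint recursive branches combine into a single permutation graph---all of which follow cleanly from the geometric separation enforced by the median lines. Notably, this argument does not route through Lemma~\ref{lem:cocompar}, which for $B_1$-VPG would only yield co-comparability graphs of dimension $3$ (the passage from dimension $3$ down to permutation requires a $1$-string hypothesis that $B_1$-VPG need not satisfy), explaining why a direct alternating-split construction is necessary.
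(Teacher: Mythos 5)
Your proof is correct and relies on the same ingredients as the paper's --- median splits in both axis directions, a classification of the one-bend curves by shape, and Lemma~\ref{lem:permutation} applied to cornered representations --- but it assembles them in a genuinely different order. The paper runs three independent stages in sequence: a horizontal median split into $\log n$ \emph{centered} representations (Lemma~\ref{lem:centered}), a shape split of each into two \emph{grounded} representations, and a vertical median split of each grounded piece into $2\log n$ cornered ones (Corollary~\ref{cor:B1cornered}), giving the product bound $\log n\cdot 2\cdot 2\log n=4\log^2 n$. You instead interleave the two directions: the vertical median split comes first and its left/right parts are handled by recursion (with pairwise merging across the edge-free cut), while the horizontal machinery is applied only to the four shape classes of the middle set; this yields the recurrence $f(n)\leq f(n/2)+4\log n$ and the slightly sharper bound $2\log^2 n+2\log n$, still within the claimed $4\log^2 n$ for $n\geq 2$. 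Your cornering check (vertical segment meets the horizontal ray, horizontal segment meets the vertical ray along $\m$) is exactly right, and merging same-level middles from $y$-separated branches is sound because a disjoint union of permutation graphs is a permutation graph. One small caveat: your closing claim that an alternating construction is \emph{necessary} overstates the case --- the paper's non-interleaved pipeline also bypasses Lemma~\ref{lem:cocompar}, since a $B_1$ curve has only one horizontal segment and so a centered $B_1$-representation shape-splits directly into two grounded ones. As in the paper, the base cases ($n\leq 4$ and degenerate bendless curves) need the usual ad hoc handling, which you acknowledge.
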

\begin{proof}
The claim is trivial if $n=1$, so assume $n>1$.
Fix a $B_1$-VPG-representation $R$, and
split it into $\log n$ {\singlevertical} centered $B_1$-VPG-representations 
using Lemma~\ref{lem:centered}.  Split each of them into two
centered $B_1$-VPG-representations, one of those curves with the horizontal
segment above the centering line, and one with the rest.  Each of the
resulting $2\log n$ centered $B_1$-VPG-representations is now grounded
(possibly after a $180^\circ$ rotation).  We can split each of them
into $2\log n$ permutation graphs using
Corollary~\ref{cor:B1cornered}, for a total of $4\log^2 n$ permutation
graphs.
\end{proof}

\section{Applications}

We now show how Theorem~\ref{thm:main} and \ref{thm:main2}
can be used for improved approximation algorithms for $B_2$-VPG-graphs.   
The techniques
used here are virtually the same as the one by Keil and Stewart
\cite{KeilStewart2006} and require two things.
First, the problem considered needs to be solvable on the special graphs class
(such as outerstring graph or co-comparability graph or permutation
graph) that we use.  Second, the problem must be {\em hereditary}
the sense that a solution in a graph implies a solution
in an induced subgraphs, and solutions in induced subgraphs can be used to
obtain a decent solution in the original graph.

We demonstrate this in detail using
weighted independent set, which Keil et al.~showed to be polynomial-time 
solvable in outer-string graphs \cite{cit:keil}.  Recall that this
is the problem, given a graph with vertex-weights, of finding a 
subset $I$ of vertices that has no vertices between them such
that $w(I):=\sum_{v\in I} w(v)$ is maximized, where $w(v)$ denotes
the weight of vertex $v$.  The run-time to solve weighted independent
set in outerstring graphs is $O(N^3)$, where $N$ is the number of segments
in the given outer-string representation. 

\begin{theorem} 
\label{thm:main2}
There exists a $(2\log n)$-approximation algorithm for 
weighted independent set on {\singlevertical} graphs with
run-time $O(N^3)$, where $N$ is the total number of segments
used among all {\singlevertical} objects.
\end{theorem}
\begin{proof}
If $n=1$, then the unique vertex is the maximum weight independent set.
Else,
use Theorem~\ref{thm:main} to partition the vertices of the given graph $G$
into at most $2\log n$ sets, each of which induces an outer-string
graph.  This takes $O(N+n\log n)$ time, where
$N$ is the total number of segments of the representation of $G$.

Now
solve the weighted independent set problem in each subgraph $G_i$ by
applying the algorithm of Keil et al.  
If $G_i$ had an outer-string representation with $N_i$ segments in total,
then this takes time $O(\sum N_i^3)$ time.  Note that if a {\singlevertical}
object consisted of one vertical and $\ell$ horizontal segments, then we
can trace around it with a curve with $O(\ell)$ segments.  Hence all
curves together have $O(N)$ segments and the
total run-time is $O(N^3)$.

Let $I^*_i$ be the maximum-weight independent set in $G_i$, and
return as set $I$ the set in $I^*_1,\dots,I^*_k$ that has the
maximum weight.  To argue the approximation-factor, let
$I^*$ be the maximum-weight independent set of $G$,
and define $I_i$ to be all those elements of $I^*$ that belong to $R_i$,
for $i=1,\dots,k$.  Clearly $I_i$ is an independent set of $G_i$, and
so $w(I_i)\leq w(I_i^*)$.  But on the other hand $\max_i w(I_i) \geq w(I^*)/k$
since we split $I^*$ into $k$ sets.  Therefore $w(I)=\max_i w(I_i^*)
\geq w(I^*)/k$, and so the returned independent set is within a factor
of $k\leq 2\log n$ of the optimum.
\qed
\end{proof}

We note here that the best algorithm for independent set in general string
graphs achieves an approximation factor of $O(n^\varepsilon)$, under
the assumption that any two strings cross each other at most a constant
number of times \cite{FoxPach2011}.  This algorithm only works for unweighted
independent set; we are not aware of any approximation results for
weighted independent set in arbitrary string graphs.

The reader may wonder what types of graphs are single-vertical graphs.
It is not hard to show that all planar
graphs are single-vertical graphs (use a representation with touching
$T$'s \cite{FMR94}), and so are all graphs of boxicity 2 (i.e.,
intersection graphs of axis-aligned boxes) and intersection graphs
of disks in the plane.  Unfortunately, for these special graph
classes, the above theorem is no improvement over existing algorithms
for weighted independent set \cite{Baker94,Chan2003,CC09}.

Because $B_2$-VPG-graphs can be vertex-split into two {\singlevertical}
$B_2$-VPG-representations, and the total number of segments used is
$O(n)$, we also get:

\begin{corollary} 
\label{thm:main2}
There exists a $(4\log n)$-approximation algorithm for 
weighted independent set on $B_2$-VPG-graphs with
run-time $O(n^3)$.
\end{corollary}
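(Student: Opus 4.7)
The plan is to combine Lemma~\ref{lem:B2VPGsinglevertical} with the previous theorem in a straightforward two-step reduction, then verify the approximation factor and the run-time bound separately.

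First, given the input $B_2$-VPG-graph $G$ on $n$ vertices, I would apply Lemma~\ref{lem:B2VPGsinglevertical} to vertex-partition $G$ into two induced subgraphs $G_v$ and $G_h$, each admitting a {\singlevertical} $B_2$-VPG-representation (the second one obtained after a $90^\circ$ rotation). This split is immediate from the representation and takes $O(n)$ time. Then, for each of $G_v$ and $G_h$, I would invoke the preceding theorem (the $(2\log n)$-approximation for weighted independent set on {\singlevertical} graphs) to compute independent sets $I_v$ and $I_h$, respectively. The algorithm returns whichever of $I_v, I_h$ has larger weight.

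To bound the approximation ratio, let $I^*$ be an optimum weighted independent set in $G$, and split it as $I^* = I^*_v \cup I^*_h$ according to whether each vertex lies in $V_v$ or $V_h$. One of these parts has weight at least $w(I^*)/2$; without loss of generality $w(I^*_v) \geq w(I^*)/2$. Since $I^*_v$ is an independent set in $G_v$, the optimum weighted independent set $\mathrm{OPT}(G_v)$ of $G_v$ has weight at least $w(I^*)/2$, and the preceding theorem guarantees that $w(I_v) \geq \mathrm{OPT}(G_v) / (2\log n) \geq w(I^*)/(4\log n)$. Taking the better of $I_v, I_h$ only helps, so the output is within a factor of $4\log n$ of optimum.

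For the run-time, note that in the original $B_2$-VPG-representation each curve consists of at most three segments, so the total number of segments is $O(n)$. Each of the two {\singlevertical} representations produced by Lemma~\ref{lem:B2VPGsinglevertical} (possibly after rotation) uses the same set of segments, hence also $O(n)$ segments. The previous theorem then runs in time $O(N^3) = O(n^3)$ on each of $G_v$ and $G_h$, for a total of $O(n^3)$. The only mild subtlety is that the second subgraph has been rotated, but rotation does not change the intersection graph and so does not affect either correctness or run-time; this is the only place where a small verification is needed, and it is immediate.
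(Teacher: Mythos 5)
Your proposal is correct and matches the paper's (very terse) argument exactly: split into two single-vertical subgraphs via Lemma~\ref{lem:B2VPGsinglevertical}, run the $(2\log n)$-approximation on each, and return the heavier set, losing a factor of $2$ from the split and $2\log n$ from the subroutine. The run-time analysis via $N=O(n)$ segments is also the same as the paper's.
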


Another hereditary problem is {\em colouring}: Find the minimum number
$k$ such that we can assign numbers in $\{1,\dots,k\}$ to vertices such
that no two adjacent vertices receive the same number.
Fox and Pach \cite{FoxPach2011} pointed out that if we have a $c$-approximation
algorithm for Independent Set, then we can use it to obtain an 
$O(c\log n)$-approximation algorithm for colouring.  Therefore our
result also immediately implies an $O(\log^2 n)$-approximation algorithm
for colouring in {\singlevertical} graphs and $B_2$-VPG-graphs.

Another hereditary problem is {\em weighted clique}: Find the maximum-weight subset
of vertices such that any two of them are adjacent.  (This is independent set
in the complement graph.)  We are not aware of any algorithms to solve weighted
clique in outerstring graphs (but it is also not known to be NP-hard).  For
this reason, we use the split into co-comparability graphs instead; weighted
clique can be solved in quadratic time
in co-comparability graphs (because weighted independent
set is linear-time solvable in comparability graphs \cite{Gol80}).  Weighted
clique is also linear-time solvable on permutation graphs \cite{Gol80}.
We therefore have:

\begin{theorem}
There exists an $(8\log^3 n)$-approximation algorithm for 
weighted clique on $B_2$-VPG-graphs with run-time $O(n^2)$.
The run-time becomes $O(n)$ if the graph is a 1-string $B_2$-VPG graph,
and the approximation factor becomes $4\log^2 n$ if the graph is
a $B_1$-VPG-graph.
\end{theorem}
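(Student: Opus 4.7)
The plan is to mirror the structure of the proof of the weighted-independent-set approximation above, replacing the outerstring decomposition by the co-comparability decomposition for $B_2$-VPG graphs proved earlier in the paper, and swapping in a polynomial-time weighted-clique subroutine on the resulting subgraphs. The key enabler is that weighted clique is hereditary under vertex partitions: if $V=V_1\cup\dots\cup V_k$ and $C^{\star}$ is a maximum-weight clique of $G$, then each $C^{\star}\cap V_i$ is a clique of $G[V_i]$, so by averaging $\max_i w(C^{\star}\cap V_i)\geq w(C^{\star})/k$. Consequently, returning the heaviest clique computed across the subgraphs is within a factor $k$ of the optimum.

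For the $B_2$-VPG case I would apply the co-comparability decomposition to obtain $k\leq 8\log^3 n$ induced subgraphs $G[V_i]$ that are co-comparability graphs of poset dimension~$3$. On each such $G[V_i]$, maximum weighted clique coincides with maximum weighted independent set in the complement, which is a comparability graph; this is solvable in linear time by Golumbic's algorithm \cite{Gol80} once the complement has been built in $O(n_i^2)$ time. Summing gives $\sum_i n_i^2 \leq n\cdot\max_i n_i \leq n^2$, so the total running time is $O(n^2)$, as required.

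For the 1-string $B_2$-VPG case the same decomposition yields permutation-graph subgraphs instead, on which weighted clique is linear-time; summing then gives $O(\sum_i n_i)=O(n)$. For the $B_1$-VPG case I would invoke Theorem~\ref{thm:main3} instead, producing $4\log^2 n$ permutation-graph pieces, and again run the linear-time clique algorithm per piece for $O(n)$ total time and approximation factor $4\log^2 n$. I do not expect any substantial obstacle: the proof is a direct composition of the decomposition theorems with standard polynomial-time weighted-clique algorithms, and the only care needed is to pair each variant with its correct decomposition and the appropriate subroutine (comparability complementation for the $B_2$-VPG case, direct permutation-graph clique for the other two).
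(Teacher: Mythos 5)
Your proposal is correct and follows essentially the same route as the paper: the paper derives this theorem directly from the preceding discussion by combining the $8\log^3 n$ co-comparability decomposition (Theorem~\ref{thm:main2}), the $4\log^2 n$ permutation-graph decomposition for $B_1$-VPG graphs (Theorem~\ref{thm:main3}), the hereditary averaging argument already used for weighted independent set, and the quadratic/linear-time weighted-clique subroutines on co-comparability and permutation graphs from \cite{Gol80}. Your write-up just makes these steps explicit.
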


\iffalse
A similar result holds for {\em clique cover}, which is the problem of colouring
the complement:  Find the minimum $k$ such that the vertex set can be partitioned
into $k$ sets, each of which induces a clique.  Again the complexity of clique
cover is unknown for outerstring graphs, but polynomial for co-comparability graphs.

\begin{theorem}
There exists a $(8\log^3 n)$-approximation algorithm for 
clique cover on $B_2$-VPG-graphs with run-time $O(n^2)$.
\end{theorem}
\fi
In a similar manner, we can get poly-time $(8\log^3 n)$-approximation algorithms
for clique cover, maximum $k$-colourable subgraph, and maximum
$h$-coverable subgraph.  See \cite{KeilStewart2006} for the definition
of these problems, and the argument that they are hereditary.

\section{Conclusions}

We presented a technique for decomposing {\singlevertical} graphs into outer-string
subgraphs, $B_2$-VPG-graphs into co-comparability graphs, and 1-string $B_2$-VPG-graphs
into permutation graphs.  We then used these results to obtain approximation algorithms
for hereditary problems, such as weighted independent set.

We close with some open problems:
\begin{itemize}
\item Can we use a different method of splitting the representations to devise better
	approximation algorithms for $B_2$-VPG-graphs?  In particular, can we find
	an $O(1)$-approximation algorithm, or maybe even a PTAS, for independent set?
	Or is this problem APX-hard in $B_2$-VPG graphs?
\item Can we use a different method of combining the subgraphs to use such splits
	for problems that are not hereditary, but that are local in some sense?  For
	example, can we find a polylog-approximation algorithm for vertex cover or dominating
	set?
\item We can argue that a similar splitting technique can be used to split graphs with
	a $B_k$-VPG-representation for which all curves are monotone in both $x$-direction
	and $y$-direction.  But this is rather restrictive, and the number of subgraphs
	is rather large ($O(f(k)\log^k n)$ for some function $f(k)$).   Are there 
	poly-log approximation algorithms for, say, independent set in $B_k$-VPG-graphs
	for $k\geq 3$?
\end{itemize}

Last but not least, orthogonality was crucial for all our splits.  If curves are allowed
to have up to $k$ bends, but are not restricted to use horizontal or vertical lines,
are there any approximation algorithms better than the $O(n^\varepsilon)$-factor proved
by Fox and Pach \cite{FoxPach2011}?  Even for $k=0$ (i.e., intersection graphs of segments) this problem
appears wide open.

\bibliography{recursive}{}
\bibliographystyle{plain}

\end{document}